\newcommand{\ie}[0]{i.\,e.\@\xspace}
\newcommand{\inmsg}[2]{#1?#2}
\newcommand{\outmsg}[2]{#1!#2}
\newcommand{\inputs}{\mathcal{I}}
\newcommand{\msg}[2]{#1.#2}
\newcommand{\trans}[1]{\xrightarrow{#1}}
\newcommand{\concat}{\smallfrown}
\newcommand{\emptytrace}{\varnothing}
\newcommand{\traces}{\mathcal{T}}
\newcommand{\fand}{\sqcap}
\newcommand{\for}{\sqcup}
\newcommand{\fimp}{\Rightarrow}
\newcommand{\fnot}[1]{\overline{#1}}
\newcommand{\faulteq}{\approx}
\newcommand{\irrelevant}{\square}
\newcommand{\texists}{\mathit{ex}}
\newcommand{\tvalue}{\mathit{val}}
\newcommand{\powerset}[1]{\mathcal{P}(#1)}
\begin{document}
\title{Compositionality of Component Fault Trees -- Definitions and Proofs}
\author{Simon Greiner \and 
Peter Munk \and 
Arne Nordmann } 
\authorrunning{S. Greiner et al.}
% First names are abbreviated in the running head.
% If there are more than two authors, 'et al.' is used.
%
\institute{Robert Bosch GmbH, Corporate Sector Research, Renningen, Germany
\email{\{firstname.lastname\}@bosch.com}}
\maketitle     % typeset the header of the contribution
\begin{abstract}
In the paper \emph{Compositionality of Component Fault Trees} \cite{greiner19}, we present a discussion of the compositionality of correctness of component fault trees.
In this technical report, we present the formal proof of the central theorem of the aforementioned publication.
\end{abstract}

\section{Introduction}
This technical report presents the formal proof for the compositionality of correctness of Component Fault Trees (CFTs).
This report accompanies the paper~\cite{greiner19}, which contains additional explanations and examples for the definitions, Lemmas and theorems presented here.
Please note that this technical report is not meant to be self contained, but is intended to be comprehensible only in combination with the original paper.

The structure of this report is as follows:
First we provide a formal definition for components, Component Fault Trees, and correctness of CFTs.
In the following section, we define composition of component fault trees and formulate our central theorem stating that correctness of CFTs is compositional.
The formal proof for this theorem is rather lengthy and technical and is presented in the final section.

\section{Components and Component Fault Trees}

In the remainder of this work we take the formalization of components from Greiner and Grahl~ \cite{greiner2016noninterference} and reuse their notation for better comparability of further results in this paper.

%What is a component?
%Typically, has an internal state, has ports, receives inputs via input ports and provides outputs via output ports.
A component has an internal state, input ports, and output ports.
A component can receive messages via input ports and send messages via its output ports. Received messages can trigger the component to change its internal state.
Formally, we consider components as Input-Output Labeled Transition Systems (see \cite{rafnssonHS12} for a formal definition of IOLTS).
A port has a name and a signature, \ie names and types of variables that can be communicated via the port.
For a message $m$ communicated via an input port with name $p$ with value $v$, we write $m = \inmsg{p}{v}$, 
for a message $n$ communicated via an output port with name $q$ with value $w$, we write $n = \outmsg{q}{w}$.
We refer to the set of all messages communicated via an input port as inputs, and the set of all messages communicated via an output port as outputs.
If it does not matter whether a message is an input or an output, we write $m = \msg{p}{v}$ or $n = \msg{q}{w}$ respectively.
We write $c \trans{m} c'$ for a component $c$ communicating message $m$ and transitioning to a component $c'$. 
You can consider $c'$ to have a changed internal state.
If $c'$ is irrelevant, we write $c \trans{m}$, if there exists some $c'$ such that $c \trans{m} c'$.

The behavioral definition of a component limits the sequence of messages a component can communicate.
We refer to a sequence of messages as a trace. 
We use $\concat$ as the concatenation operator for traces and $\emptytrace$ to refer to the empty trace.
The length of a trace is defined as the amount of messages in a trace.
We write $c \trans{t} c'$ if a component $c$ transitions to component $c'$ while communicating the trace $t$.
A component $c$ can communicate a trace $t \concat m$ while transitioning to component $c'$, if there exists a component $c''$ such that $c \trans{t} c''$ and $c'' \trans{m} c'$.
We again write $c \trans{t} $, if there exists some $c'$ such that $c \trans{t} c'$.
We refer to all possible traces as $\traces$.
Finally, we explicitly define environments in which components can run. Environments model the entities providing inputs to a component after observing the behavior of the component.
\begin{definition}[Environment]
	\label{def:environment}
	An environment $\omega$ is a function $\traces \mapsto \powerset{\inputs}$, where $\powerset{\inputs}$ is the powerset of all inputs.	
\end{definition}

\begin{definition}[Communication under Environment]
	A component $c$ can communicate a trace $t$ under an environment $\omega$, written $\omega \models c\trans{t}$, iff $c \trans{t}$ and for all $t_1, t_2, \inmsg{p}{v}$ with $t = t_1 \concat p?v \concat t_2$, it holds that $\inmsg{p}{v} \in \omega(t_1)$.	
\end{definition}

\begin{definition}[Event]
	\label{def:event}
	An event $E$ is a tuple $(p,t)$, where $p$ is port and $t$ is a type.
	If $p$ is an input (output) port, $E$ is an input (output) event.
\end{definition}

\begin{definition}[CFT]
	\label{def:cft}
	A CFT is a tuple $(P,E)$, where $P$ is a propositional logic formula, where each literal in $P$ is an input event, and each input event only appears non-negated in $P$. $E$ is an output event.
\end{definition}

For every propositional formula $P$, with literals only appearing non-negated, the formula $\fnot{P}$ is a propositional formula with literals only appearing negated.
For $\fnot{P}$ we can always find a disjunctive normal form with clauses $\fnot{P_1}, ... \fnot{P_n}$, such that $\fnot{P} = \fnot{P_1} \for ... \for \fnot{P_n}$.
$\fnot{P} \fimp \fnot{E}$ then holds, iff for all clauses $\fnot{P}_i$ it holds that $\fnot{P}_i \fimp \fnot{E}$.
\begin{definition}[Clause]
	A clause $\fnot{P_i}$ is a propositional formula, where each literal only appears negated and $\fand$ is the only logical operator in the formula.
\end{definition}	

\begin{definition}[Message Event-Equivalence]
	\label{def:message:equiv}
	A message $m=q.v$ is irrelevant w.r.t. an event $A_i = (q_i, t_i)$, if $q \neq q_i$.
	For a message $m$ that is irrelevant, we write $m \faulteq_{\fnot{A_i}} \irrelevant$.
	
	Two messages $m_1 = q_1.v_1$ and $m_2 = q_2.v_2$ are event-equivalent w.r.t. an event $A_i = (q_i, t_i)$, written $m_1 \faulteq_{\fnot{A_i}} m_2$, if
	\begin{align*}
		& m_1 \faulteq_{\fnot{A_i}} \irrelevant \text{ and } m_2 \faulteq_{\fnot{A_i}} \irrelevant  \text{ or }\\
		& q_i = q_1 = q_2 \text{ and } t_i = \texists  \text{ or }\\
		& q_i = q_1 = q_2 \text{ and } t_i = \tvalue \text{ and } v_1 = v_2
	\end{align*}
	Two messages $m_1 = q_1.v_1$ and $m_2 = q_2.v_2$ are event-equivalent w.r.t. a clause $\fnot{P_i} = \fnot{A_1} \fand \dots \fand \fnot{A_n}$, written $m_1 \faulteq_{\fnot{P_i}} m_2$, if
	%\begin{align*}
	$m_1 \faulteq_{\fnot{A_i}} m_2 \text{ for all } 0 < i \leq n$
	%\end{align*}
	
	Two messages $m_1 = q_1.v_1$ and $m_2 = q_2.v_2$ are event-equivalent w.r.t. a clause $\fnot{P_i}$ and an event $E$, written $m_1 \faulteq_{\fnot{P_i},\fnot{E}} m_2$, if
	%\begin{align*}
	$ m_1 \faulteq_{\fnot{P_i}} m_2 \text{ and } m_1 \faulteq_{E} m_2$
	%\end{align*}
\end{definition}

\begin{definition}[Trace Event-Equivalence]
	\label{def:trace:equiv}
	Two traces $t_1, t_2$ are event-equivalent w.r.t. an event $A_i$, written $t_1 \faulteq_{\fnot{A_i}} t_2$, iff
	\begin{align*}
		& t_1 = \emptytrace \text{ and } t_2 = \emptytrace \text{ or} \\
		& t_1 = m_1 \concat t_1' \text{ and } m_1 \faulteq_{\fnot{A_i}} \irrelevant \text { and } t_1' \faulteq_{\fnot{A_i}} t_2 \text{ or} \\
		& t_2 = m_2 \concat t_2' \text{ and } m_2 \faulteq_{\fnot{A_i}} \irrelevant \text { and } t_1 \faulteq_{\fnot{A_i}} t_2' \text{ or} \\
		& t_1 = m_1 \concat t_1' \text{ and } t_2 = m_2 \concat t_2' \text{ and } m_1 \faulteq_{\fnot{A_i}} m_2 \text{ and } t_1' \faulteq_{\fnot{A_i}} t_2' 
	\end{align*}
	Event-equivalence of traces w.r.t. a clause is defined analogously.
\end{definition}

\begin{definition}[Erroneous Environment]
	\label{def:errorenv}	
	$\omega_f$ is an erroneous environment for $\omega$ w.r.t. a clause $\fnot{P_i}$ and an output event $E$, written $\omega_f \faulteq_{\fnot{P_i},\fnot{E}} \omega$, iff for all $t_1 \faulteq_{\fnot{P_i},\fnot{E}} t_2$ it holds that
	\begin{align}
		& \forall \inmsg{p}{v} \in \omega(t_1) && \bullet (\inmsg{p}{v} \faulteq_{\fnot{P_i},\fnot{E}} \irrelevant \; \text{ or } \; \exists \inmsg{q}{u} \in \omega_f(t_2) \bullet \inmsg{p}{v} \faulteq_{\fnot{P_i}} \inmsg{q}{u}) \text{ and} \label{def:errorenv:equalin}\\
		& \forall \inmsg{q}{u} \in \omega_f(t_2) && \bullet (\inmsg{q}{u} \faulteq_{\fnot{P_i},\fnot{E}} \irrelevant \; \text{ or } \; \exists \inmsg{p}{v} \in \omega(t_1)  \bullet \inmsg{p}{v} \faulteq_{\fnot{P_i}} \inmsg{q}{u}) \label{def:errorenv:noadditional}
	\end{align}
\end{definition}

\begin{definition}[Environment (extd.)]
	\label{def:environment:extd}
	A function $\omega$ is an environment w.r.t. a clause $\fnot{P_i}$ and an event $E$, iff it is an environment according to Definition \ref{def:environment}, and $\omega$ is an erroneous environment to itself according to Definition \ref{def:errorenv} (\ie, $\omega \faulteq_{\fnot{P_i},\fnot{E}} \omega$).
\end{definition}

\begin{definition}[Clause correctness w.r.t. a component]
	\label{def:safety:clause}
	Given the relation $\faulteq_{\fnot{P},\fnot{E}}$ for the clause $\fnot{P}$ and the output event $E$.
	$\fnot{P}$ and $E$ are correct w.r.t. a component $c$, if
	\begin{align*}
		\forall \omega_f, \omega_c \forall t_f \bullet \; \omega_f \faulteq_{\fnot{P},\fnot{E}} \omega_c \; \wedge \;  \omega_f  \models c \trans{t_f} \; \implies \;
		\exists t_c \bullet \omega_c \models c \trans{t_c} \; \wedge \; t_f \faulteq_{\fnot{P},\fnot{E}} t_c
	\end{align*}
\end{definition}

\begin{definition}[CFT correctness w.r.t. a component]
	\label{def:safety:cft}
	Given a component $c$, a CFT $(P,E)$ with $\fnot{P} = \fnot{P_1} \for ... \for \fnot{P_n}$.
	$(P,E)$ is correct w.r.t. $c$, iff $\fnot{P_i}$ and $E$ are correct w.r.t. $c$ for all $0 < i \leq n$.
\end{definition}

\section{Compositionality of Component Fault Tree Correctness}

\begin{definition}[CFT composition]
	\label{def:cft:composition}
	Let $c$ and $d$ be components, $p_1 \ldots p_n$ ports, which are input ports of $c$ and outputs ports of $d$, $A^1_1 \ldots A^1_{m_1}, \ldots A^n_{1} \ldots A^n_{m_n}$ events with $A^i_j$ being an event on port $p_i$, i.e. an input event of $c$ and an output event of $d$.
	Let further be $(P_c,E)$ be a CFT of $c$ and $(P^i_j, A^i_j)$ the CFTs of $d$ for the output events $A^i_j$.
	
	We define the CFT of the composition of $c$ and $d$ for output event $E$ as $(P_{c,d},E)$, where
	$P_{c,d}$ is formula $P_c$ with each occurrence of $A^i_j$ is replaced by $P^i_j$.
\end{definition}

\begin{definition}[Deterministic components]
	\label{def:deterministic}
	A component $c$ is deterministic, if
	\begin{align*}
		& c \trans{p?v} &&\implies c \trans{p?v'} \text{ for all } v \text{ and } v' \text{ and}\\
		& c \trans{m_1} \text{ and } c \trans{m_2} \text{ and } m_1 \neq m_2 &&\implies m_1 \text{ and } m_2 \text{ are inputs, and} \\
		& c \trans{m} c_1 \text{ and } c \trans{m} c_2 &&\implies c_1 = c_2
	\end{align*}
\end{definition}
In the remainder, we assume all components to be deterministic.

\begin{theorem}[Composed CFT correctness w.r.t. composition]
	\label{thm:compositioncorrectness}
	Given components $c$ and $d$, CFT $(P_c, E)$ of $c$ and $(P^i_j, A^i_j)$ of $d$ as in Definition \ref{def:cft:composition}, such that the CFTs are correct w.r.t. the respective components.
	Then the composed CFT $(P_{c,d}, E)$ is correct w.r.t. the composition of $c$ and $d$.
\end{theorem}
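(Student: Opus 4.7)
The plan is to reduce the claim to the clause-level correctness (Definition~\ref{def:safety:clause}) of the two given families of CFTs. By Definition~\ref{def:safety:cft}, it suffices to establish clause correctness for every clause $\fnot{Q}$ in the DNF of $\fnot{P_{c,d}}$. Because $P_{c,d}$ arises from $P_c$ by substituting each $A^i_j$ by $P^i_j$, every such $\fnot{Q}$ corresponds canonically to a choice of a clause $\fnot{P^k_c}$ of $\fnot{P_c}$ together with, for each $A^i_j$ occurring in $\fnot{P^k_c}$, a chosen clause $\fnot{P^{i,j}_{k_{i,j}}}$ of $\fnot{P^i_j}$. I would fix one such $\fnot{Q}$, an erroneous environment $\omega_f \faulteq_{\fnot{Q},\fnot{E}} \omega_c$ of the composition, and a faulty trace $t_f$ executable by the composition under $\omega_f$, and then construct a \emph{corrected} trace $t_c$ executable under $\omega_c$ with $t_f \faulteq_{\fnot{Q},\fnot{E}} t_c$.

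Next I would project $t_f$ and the two environments onto the individual components. The external inputs of $d$ come directly from the ambient environment, while the inputs of $c$ on the connecting ports $p_1, \ldots, p_n$ are the outputs produced by $d$ along the projection $t_f^d$; this yields projected traces $t_f^d, t_f^c$ and environments $\omega_f^d, \omega_f^c, \omega_c^d$ in the obvious way. I would then apply clause correctness of $(P^i_j, A^i_j)$ w.r.t.\ $d$, once for every $A^i_j$ occurring in $\fnot{P^k_c}$, using the clause $\fnot{P^{i,j}_{k_{i,j}}}$ prescribed by $\fnot{Q}$. Each application provides an execution of $d$ under $\omega_c^d$ that is $(\fnot{P^{i,j}_{k_{i,j}}}, \fnot{A^i_j})$-equivalent to $t_f^d$, and by determinism of $d$ (Definition~\ref{def:deterministic}) these collapse to a single trace $t_c^d$ that is simultaneously event-equivalent to $t_f^d$ with respect to every relevant pair.

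Using the outputs of $t_c^d$ on the $p_i$ ports, together with $\omega_c$'s remaining external inputs, I would define an environment $\omega_c^c$ and argue $\omega_f^c \faulteq_{\fnot{P^k_c}, \fnot{E}} \omega_c^c$: any output of $d$ that has changed between $t_f^d$ and $t_c^d$ is $A^i_j$-equivalent to its original and thus, by Definitions~\ref{def:message:equiv} and~\ref{def:errorenv}, respects the atom $\fnot{A^i_j}$ of $\fnot{P^k_c}$. Clause correctness of $(P_c, E)$ w.r.t.\ $c$ applied to $\fnot{P^k_c}$ then yields a trace $t_c^c$ with $t_f^c \faulteq_{\fnot{P^k_c}, \fnot{E}} t_c^c$; interleaving $t_c^c$ and $t_c^d$ along the synchronized $p_i$-communications produces the desired $t_c$, and the equivalence $t_f \faulteq_{\fnot{Q}, \fnot{E}} t_c$ follows by unfolding Definition~\ref{def:trace:equiv} literal by literal.

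The main obstacle, and where most of the technical length is expected to lie, is the combining step at the end of the second paragraph: each application of clause correctness to $d$ only provides a trace equivalent to $t_f^d$ with respect to one output event $A^i_j$, whereas one needs a single execution of $d$ that is simultaneously equivalent to $t_f^d$ with respect to all the $A^i_j$ appearing in $\fnot{P^k_c}$ and whose $p_i$-outputs are consistent with the sequencing that $c$ requires in the composition. Showing that determinism together with a careful induction on trace length suffices for this simultaneous alignment, and that the environment projections behave correctly under arbitrary interleavings of the individual component steps, is where the bulk of the work is concentrated.
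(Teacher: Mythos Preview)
Your outline is sound and you have correctly located the technical heart of the argument (the determinism-based collapse of the several corrected $d$-traces into a single one).  The paper, however, organises the proof quite differently.

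Rather than projecting the composed execution onto $c$ and $d$ and reassembling, the paper introduces an auxiliary \emph{strict} CFT composition $(P_{c,d}',E)$ in which each $A^i_j$ is replaced by $A^i_j \fand P^i_j$ instead of by $P^i_j$ alone (Definition~\ref{def:cft:composition:strict}).  It then proves $(P_{c,d}',E)$ correct w.r.t.\ $c$ (trivially, since the added conjuncts mention only $d$-ports; Lemma~\ref{lem:correct:c}) and correct w.r.t.\ $d$ (the hard step, by contradiction, normalising a putative counterexample via Lemma~\ref{lem:counter:simple} and then using determinism exactly as you anticipate; Lemma~\ref{lem:correct:d}).  At that point the paper does \emph{not} interleave traces by hand; it invokes a general compositionality theorem from~\cite{greiner2016noninterference} to conclude that $(P_{c,d}',E)$ is correct w.r.t.\ the composition (Theorem~\ref{thm:cft:composition:strict}).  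Finally, since dropping the $A^i_j$ literals only coarsens the event-equivalence on the now-internal ports, every counterexample for $(P_{c,d},E)$ would already be one for $(P_{c,d}',E)$, which finishes the argument.

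What the two routes buy: the paper's detour through $(P_{c,d}',E)$ lets it treat ``correct w.r.t.\ each component'' and ``correct w.r.t.\ the composition'' as separate concerns, delegating the projection/interleaving bookkeeping entirely to the cited external theorem.  Your direct approach is self-contained and conceptually cleaner, but you must discharge by hand precisely the obligations the paper outsources: that the projected environments of $d$ really satisfy the erroneous-environment relation for each individual pair $(\fnot{P^{i,j}_{k_{i,j}}},\fnot{A^i_j})$ (not just for $(\fnot{Q},\fnot{E})$), and that the interleaving of $t_c^c$ and $t_c^d$ is an execution of the composition under $\omega_c$.  Both are doable, but they are exactly the content of the external theorem the paper relies on.
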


\section{Proof of Compositionality of CFT Correctness}
%\section{Proof for Theorem 1}
\label{sec:proof1}

In this section, we present the proof for Theorem \ref{thm:compositioncorrectness}.

The structure of the proof is as follows.
First we define a notion of CFT composition in Definition \ref{def:cft:composition:strict} which is more strict than the composition as defined above.
We then show two lemmas about the structure of the formula resulting from the strict CFT composition in Lemmas \ref{lem:pidiff} and \ref{lem:clause:complete}.
We then show correctness of the strict composed CFT w.r.t. the component $c$ (Lemma \ref{lem:correct:c}) and $d$ (Lemma \ref{lem:correct:d}).
For the correctness w.r.t. $d$, we require an additional Lemma about counterexamples for the correctness (Lemma \ref{lem:counter:simple}).

Finally, the proofs for the correctness of the strict composition w.r.t. the composition of $c$ and $d$ (Theorem \ref{thm:cft:composition:strict}) and finally Theorem \ref{thm:compositioncorrectness} follow relatively directly from the Lemmas.

\begin{definition}[Strict CFT composition]
	\label{def:cft:composition:strict}
Let $c$ and $d$ be components, $p_1 \ldots p_n$ ports, which are input ports of $c$ and outputs ports of $d$, $A^1_1 \ldots A^1_{m_1}, \ldots A^n_{1} \ldots A^n_{m_n}$ events with $A^i_j$ being an event on port $p_i$, i.e. an input event of $c$ and an output event of $d$.
Let further be $(P_c,E)$ be a CFT of $c$ and $(P^i_j, A^i_j)$ the CFTs of $d$ for the output events $A^i_j$.

We then define the CFT of the strict composition of $c$ and $d$ for the output event $E$ as $(P_{c,d}',E)$, where
$P_{c,d}$ is the formula $P_c$ with each occurrence of $A^i_j$ is replaced by $A^i_j \fand P^i_j$.
\end{definition}

Note that Definition \ref{def:cft:composition:strict} differs from Definition \ref{def:cft:composition} only in a way such that we keep the events of the connected ports in the formula.

%%%%%%%%%%%%%%%%%%%%%%%%%%%%%%%%
%%% Adding with A AND P
%%%%%%%%%%%%%%%%%%%%%%%%%%%%%%%%

\begin{lemma}
	\label{lem:pidiff}
	Let $\fnot{P_1} \ldots \fnot{P_n}$ be the clauses of the CFT $(P_c, E)$, 
	\ie $\fnot{P_c} = \fnot{P_1} \for \ldots \for \fnot{P_n}$
	
	Then $\fnot{P_{c,d}'} = \fnot{P_1'} \for \ldots \for \fnot{P_n'}$ with 
	$\fnot{P_i'} = \fnot{P_i} \fand \fnot{P_{m+1}} \fand \ldots \fand \fnot{P_{m+k}}$, where $P_{m+i}$ is the formula from the CFT $(P_{m+1},A_i)$ of $d$.
\end{lemma}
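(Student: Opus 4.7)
The approach is a direct syntactic calculation: unfold the strict composition, push the substitution through the negation, and apply it clause by clause to the DNF of $\fnot{P_c}$.

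First, I would reduce the substitution under negation. By Definition \ref{def:cft:composition:strict}, $P_{c,d}'$ is obtained from $P_c$ by replacing every input atom $A^i_j$ with $A^i_j \fand P^i_j$. Because substitution acts on atoms and commutes with the propositional connectives, $\fnot{P_{c,d}'}$ is obtained from $\fnot{P_c}$ by replacing each negated atom $\fnot{A^i_j}$ with $\fnot{A^i_j \fand P^i_j}$, which is $\fnot{A^i_j} \for \fnot{P^i_j}$ by De Morgan.

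Second, I would apply this substitution to the DNF decomposition $\fnot{P_c} = \fnot{P_1} \for \ldots \for \fnot{P_n}$. The outer disjunction is preserved by substitution, so $\fnot{P_{c,d}'} = \fnot{P_1'} \for \ldots \for \fnot{P_n'}$, where each $\fnot{P_i'}$ is the image of clause $\fnot{P_i}$. Writing $\fnot{P_i} = \fnot{A^{(1)}} \fand \ldots \fand \fnot{A^{(k)}}$ for its $k$ constituent negated events, the image is the conjunction, over $l = 1, \ldots, k$, of the sub-expressions $\fnot{A^{(l)}} \for \fnot{P^{(l)}}$, where $\fnot{P^{(l)}}$ denotes the negated CFT formula of $d$ corresponding to $A^{(l)}$.

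Third, I would identify this conjunction with the compact form $\fnot{P_i} \fand \fnot{P_{m+1}} \fand \ldots \fand \fnot{P_{m+k}}$ stated in the lemma, reading the $\fnot{P_{m+l}}$'s as exactly these $d$-side formulas. I expect the main delicate step to be this identification: the direct substitution yields a conjunction of disjunctions rather than a flat conjunction, so the lemma's form is best understood as a structural labelling that packages both the original $\fnot{P_c}$-clause and the freshly introduced $d$-side formulas for later use in Lemma \ref{lem:clause:complete} and the subsequent correctness arguments. The overall one-to-one correspondence between the original clauses $\fnot{P_i}$ and the new clauses $\fnot{P_i'}$, however, follows immediately from the fact that substitution commutes with the top-level disjunction of the DNF.
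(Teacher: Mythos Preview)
Your high-level plan---unfold the strict composition, push the substitution through the negation, and work clause by clause on the DNF of $\fnot{P_c}$---is exactly the paper's strategy. The divergence is at the De~Morgan step. You compute that replacing $A^i_j$ by $A^i_j \fand P^i_j$ turns $\fnot{A^i_j}$ into $\fnot{A^i_j} \for \fnot{P^i_j}$, so each $\fnot{P_i'}$ becomes a conjunction of disjunctions. The paper does \emph{not} apply De~Morgan here: in its proof it writes $(\fnot{A_{m+j}} \fand \fnot{P_{m+j}})$ with $\fand$, effectively treating the overbar as literal-wise negation that preserves the connective structure. Under that reading the claimed identity $\fnot{P_i'} = \fnot{P_i} \fand \fnot{P_{m+1}} \fand \ldots \fand \fnot{P_{m+k}}$ is immediate from associativity and commutativity of $\fand$, and no ``structural labelling'' hedge is needed.

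This is not cosmetic. Your classical expansion and the paper's flat conjunction yield genuinely different DNFs: in the paper's version every resulting clause contains both $\fnot{A_{m+j}}$ \emph{and} a clause of $\fnot{P_{m+j}}$ for each connected event, whereas your form also produces clauses that select $\fnot{A_{m+j}}$ alone for some $j$. Lemmas~\ref{lem:clause:complete}, \ref{lem:correct:c}, and~\ref{lem:correct:d} rely on the paper's shape (the first to get the displayed clause structure, the latter two to have both the $c$-side events and the $d$-side sub-formulas present simultaneously). So your third paragraph is pointing at a real tension, but the resolution the paper takes is not to reinterpret the result as a packaging convention; it is to work under the overbar convention in which $\fnot{A \fand P}$ is written $\fnot{A} \fand \fnot{P}$, after which the lemma is a one-line regrouping and the stated equality holds verbatim.
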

\begin{proof}
	\begin{align}
	\fnot{P_{c,d}'} =& \fnot{P_1'} \for \ldots \for \fnot{P_n'} \text{ with} \\
	\fnot{P_i'} =& \fnot{A_1} \fand \ldots \fand \fnot{A_m} \fand 
	(\fnot{A_{m+1}} \fand \fnot{P_{m+1}}) \fand \ldots \fand (\fnot{A_{m+k}} \fand \fnot{P_{m+k}}) \\
	\intertext{ where  $\fnot{A_1} \fand \ldots \fand \fnot{A_{m+k}}$ is the clause $P_i$ of the formula $P_c$ of the CFT $(P_c, E)$.
		$A_i$ are input events of $c$, $A_{m+i}$ are input events of $c$ which are bound by composition to an output events of $d$, and $P_{m+i}$ is the formula from the CFT $(P_{m+i},A_i)$ of $d$.
		and due to symmetry of $\fand$ we get}
	\fnot{P_i'} =& \fnot{A_1} \fand \ldots \fand \fnot{A_m} \fand 
	\fnot{A_{m+1}} \fand \ldots \fand \fnot{A_{m+k}} \fand 
	\fnot{P_{m+1}} \fand \ldots \fand \fnot{P_{m+k}}) \\
	\intertext{By definition of the clause $P_i$, we get}
	\fnot{P_i'} =& \fnot{P_i} \fand \fnot{P_{m+1}} \fand \ldots \fand \fnot{P_{m+k}} \label{eq:pidiff}
	\end{align}
\end{proof}

\begin{lemma}
	\label{lem:clause:complete}
	Let $\fnot{P_1} \ldots \fnot{P_n}$ be the clauses of the CFT $(P_c, E)$, 
	\ie $\fnot{P_c} = \fnot{P_1} \for \ldots \for \fnot{P_n}$
	
	Then $\fnot{P_{c,d}'} = \fnot{P_1'} \for \ldots \for \fnot{P_n'}$ with 
	\begin{align}
	\fnot{P_i'} =& (\fnot{A_1} \fand \ldots \fand \fnot{A_m} \fand \fnot{A_{m+1}} \fand \fnot{P^{1x1}_{m+1}} \fand \ldots \fand \fnot{A_{m+k}} \fand \fnot{P^{1xk}_{m+k}}) \for \\
	&(\fnot{A_1} \fand \ldots \fnot{A_m} \fand \fnot{A_{m+1}} \fand \fnot{P^{2x1}_{m+1}} \fand \ldots \fand \fnot{A_{m+k}} \fand \fnot{P^{2xk}_{m+k}}) \for \\
	& \ldots \for \\
	& (\fnot{A_1} \fand \ldots \fnot{A_m} \fand \fnot{A_{m+1}} \fand \fnot{P^{ox1}_{m+1}} \fand \ldots \fand \fnot{A_{m+k}} \fnot{P^{oxk}_{m+k}})
	\end{align}
where $\fnot{A_1} \ldots \fnot{A_m}$ are clauses of the CFT $(P_c,E)$ and $\fnot{P^{1x1}_{m+1}}$ is a clause of the CFT $(P_{m+i}, A_{m+i})$ of $c$.
\end{lemma}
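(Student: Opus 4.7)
The plan is to start from the conclusion of Lemma~\ref{lem:pidiff} and then unfold each $\fnot{P_{m+j}}$ into its disjunctive normal form, finally distributing $\fand$ over $\for$ so the whole $\fnot{P_i'}$ is expressed as a disjunction of clauses.

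First, I would invoke Lemma~\ref{lem:pidiff} to write
\begin{align*}
\fnot{P_i'} = \fnot{A_1} \fand \ldots \fand \fnot{A_m} \fand \fnot{A_{m+1}} \fand \fnot{P_{m+1}} \fand \ldots \fand \fnot{A_{m+k}} \fand \fnot{P_{m+k}},
\end{align*}
which already isolates the $\fnot{A_{m+j}} \fand \fnot{P_{m+j}}$ pairs that stem from the strict composition. Each $(P_{m+j}, A_{m+j})$ is a CFT of $d$, so by Definition~\ref{def:cft} every literal of $P_{m+j}$ appears non-negated, and hence, as noted in the paragraph following Definition~\ref{def:cft}, $\fnot{P_{m+j}}$ admits a disjunctive normal form
\begin{align*}
\fnot{P_{m+j}} = \fnot{P^{1}_{m+j}} \for \fnot{P^{2}_{m+j}} \for \ldots \for \fnot{P^{o_j}_{m+j}},
\end{align*}
where each $\fnot{P^{a}_{m+j}}$ is a clause in the sense of the Clause definition.

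Next I would substitute these DNF expansions into the formula for $\fnot{P_i'}$ and apply the standard distributivity law $x \fand (y \for z) = (x \fand y) \for (x \fand z)$ iteratively, once for each of the $k$ factors $\fnot{P_{m+j}}$. The resulting disjunction has one disjunct for every tuple $(a_1,\ldots,a_k)$ with $1\leq a_j \leq o_j$, \ie\ for every way of picking one clause from each $\fnot{P_{m+j}}$. Re-indexing these tuples by a single index $1,\ldots,o$ with $o = \prod_{j=1}^{k} o_j$, and writing $\fnot{P^{a \times j}_{m+j}}$ for the clause chosen from $\fnot{P_{m+j}}$ in the $a$-th tuple, I obtain exactly the form claimed in the lemma statement.

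The step is almost entirely syntactic, so I do not expect any real obstacle; the only subtlety is bookkeeping. The distributivity unfolding blows up the number of disjuncts multiplicatively, and it is slightly fiddly to fix the notation for the combinatorial indexing (hence the $a\times j$ superscripts). I would take care to state explicitly that (i) the $\fnot{A_1},\ldots,\fnot{A_{m+k}}$ factors are shared across all disjuncts because they sit outside the disjunctions being distributed, and (ii) the commutativity and associativity of $\fand$ allow the $\fnot{A_{m+j}}$ to be placed immediately adjacent to the corresponding $\fnot{P^{a \times j}_{m+j}}$, matching the grouping displayed in the lemma.
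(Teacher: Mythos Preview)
Your proposal is correct and follows essentially the same approach as the paper's own proof: start from the conclusion of Lemma~\ref{lem:pidiff}, replace each $\fnot{P_{m+j}}$ by its disjunction of clauses, apply distributivity of $\fand$ over $\for$, and regroup using commutativity/associativity of $\fand$. Your treatment is in fact more careful about the combinatorial bookkeeping (the multiplicative count $o = \prod_j o_j$ and the tuple re-indexing) than the paper, which simply assumes w.l.o.g.\ that every CFT of $d$ has the same number $o$ of clauses.
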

\begin{proof}
	We continue the re-ordering of the formula of $P_i'$ from the proof of Lemma \ref{lem:pidiff}
	\begin{align}
	\fnot{P_i'} =& \fnot{P_i} \fand \fnot{P_{m+1}} \fand \ldots \fand \fnot{P_{m+k}} \\
	\intertext{We can now replace the formulas $\fnot{P_{m+1}}$ by their clauses.
		We refer to the clauses of $\fnot{P_{m+1}}$ with $\fnot{P^1_{m+1}} \ldots \fnot{P^o_{m+1}}$. 
		Note that each of these clauses only contains input events of $d$. For easier presentation, we assume here w.l.g. that each CFT of $d$ defines $o$ clauses.}
	\fnot{P_i'} =& \fnot{P_i} \fand 
	(\fnot{P^1_{m+1}} \for \fnot{P^o_{m+1}}) \fand 
	(\fnot{P^1_{m+2}} \for \fnot{P^o_{m+2}}) \fand     			
	\ldots \fand
	(\fnot{P^1_{m+k}} \for \fnot{P^o_{m+k}})
	\intertext{Aplying distributive law on this formula, yields the following formula:}
	\fnot{P_i'} =& (\fnot{P_i} \fand \fnot{P^{1x1}_{m+1}} \fand \ldots \fnot{P^{1xk}_{m+k}}) \for 
	(\fnot{P_i} \fand \fnot{P^{2x1}_{m+1}} \fand \ldots \fnot{P^{2xk}_{m+k}}) \for \\
	& \ldots \for
	(\fnot{P_i} \fand \fnot{P^{ox1}_{m+1}} \fand \ldots \fand \fnot{P^{oxk}_{m+k}}) 
	\intertext{Again, we can restructure the formula}
	\fnot{P_i'} =& (\fnot{A_1} \fand \ldots \fnot{A_m} \fand \fnot{A_{m+1}} \fand \fnot{P^{1x1}_{m+1}} \fand \ldots \fand \fnot{A_{m+k}} \fand \fnot{P^{1xk}_{m+k}}) \for \\
	&(\fnot{A_1} \fand \ldots \fnot{A_m} \fand \fnot{A_{m+1}} \fand \fnot{P^{2x1}_{m+1}} \fand \ldots \fand \fnot{A_{m+k}} \fand \fnot{P^{2xk}_{m+k}}) \for \\
	& \ldots \for \\
	& (\fnot{A_1} \fand \ldots \fnot{A_m} \fand \fnot{A_{m+1}} \fand \fnot{P^{ox1}_{m+1}} \fand \ldots \fand \fnot{A_{m+k}} \fnot{P^{oxk}_{m+k}})    
	\end{align}
	
\end{proof}

Now we can show that the strict composition of the CFTs is correct w.r.t. the component $c$.
\begin{lemma}
	\label{lem:correct:c}
	If CFT $(P_c,E)$ is correct w.r.t. $c$, then $(P_{c,d}',E)$ is also correct w.r.t. $c$.
\end{lemma}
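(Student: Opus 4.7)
By Definition~\ref{def:safety:cft}, correctness of $(P_{c,d}', E)$ w.r.t.\ $c$ reduces to showing clause correctness of $(\fnot{P_i'}, E)$ w.r.t.\ $c$ for every clause $\fnot{P_i'}$ of $\fnot{P_{c,d}'}$. By Lemma~\ref{lem:clause:complete}, each such clause has the shape
\[
\fnot{P_i'} = \fnot{P_i} \fand \fnot{P^{jx1}_{m+1}} \fand \ldots \fand \fnot{P^{jxk}_{m+k}},
\]
where $\fnot{P_i}$ is a clause of $\fnot{P_c}$ (all its literals are input events of $c$) and each $\fnot{P^{jxl}_{m+l}}$ is a clause of a CFT of $d$ whose literals are input events of $d$, \ie events on input ports of $d$, which are not ports of $c$.

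The engine of the argument is the observation that the added literals are vacuous on data of $c$. For any message $m = q.v$ with $q$ a port of $c$ and any event $A = (p,t)$ occurring in some $\fnot{P^{jxl}_{m+l}}$ (so $p$ is not a port of $c$, hence $p \neq q$), Definition~\ref{def:message:equiv} gives $m \faulteq_{\fnot{A}} \irrelevant$; consequently any two messages $m_1, m_2$ on $c$'s ports satisfy $m_1 \faulteq_{\fnot{A}} m_2$ trivially. I would then propagate this fact upward through Definition~\ref{def:trace:equiv} by structural induction on traces of $c$, and through Definition~\ref{def:errorenv}, whose quantifications here range over traces of $c$ and inputs produced by environments of $c$. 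The conclusion is that, on $c$'s data, the equivalences $\faulteq_{\fnot{P_i'},\fnot{E}}$ and $\faulteq_{\fnot{P_i},\fnot{E}}$ coincide, both at the level of messages and traces and at the level of the erroneous-environment relation.

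Given this collapse, clause correctness for $(\fnot{P_i'}, E)$ w.r.t.\ $c$ follows at once from the assumed clause correctness of $(\fnot{P_i}, E)$: if $\omega_f \faulteq_{\fnot{P_i'},\fnot{E}} \omega_c$ and $\omega_f \models c \trans{t_f}$, the collapse yields $\omega_f \faulteq_{\fnot{P_i},\fnot{E}} \omega_c$, correctness of $(P_c, E)$ supplies a trace $t_c$ with $\omega_c \models c \trans{t_c}$ and $t_f \faulteq_{\fnot{P_i},\fnot{E}} t_c$, and the collapse translates this back to $t_f \faulteq_{\fnot{P_i'},\fnot{E}} t_c$. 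Running this for every clause produced by Lemma~\ref{lem:clause:complete} completes the proof.

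The main obstacle is the inductive bookkeeping in the collapse step: one has to check, case by case in Definition~\ref{def:trace:equiv}, that the trivially-satisfied event constraints contributed by the $\fnot{P^{jxl}_{m+l}}$ factors never gate a case of the recursion in a way that changes which pairs $(t_1,t_2)$ are equivalent, and to check that the quantification in Definition~\ref{def:errorenv} is not enlarged or shrunk because both $\omega$ and $\omega_f$ output only messages on $c$'s input ports. Once that bookkeeping is pinned down, the rest of the argument is immediate.
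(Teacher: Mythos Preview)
Your proposal is correct and follows essentially the same approach as the paper: the extra conjuncts in each clause $\fnot{P_i'}$ mention only events on input ports of $d$, which are not ports of $c$, so every message $c$ can communicate is irrelevant w.r.t.\ those events and the equivalence relation collapses to $\faulteq_{\fnot{P_i},\fnot{E}}$ on $c$'s data. The paper's proof states this observation in two sentences, appealing to Lemma~\ref{lem:pidiff} for the clause shape rather than Lemma~\ref{lem:clause:complete}; your choice of Lemma~\ref{lem:clause:complete} is arguably cleaner since Definition~\ref{def:safety:cft} is phrased in terms of clauses, but the substance is the same.
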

\begin{proof}
	A closer look at Lemma \ref{lem:pidiff} shows that the clause $\fnot{P_i'}$ differs from $\fnot{P_i}$ only by the clauses $\fnot{P_{m+1}}$ to $\fnot{P_{m+k}}$, which only contain events of input ports of $d$.
	By assumption, $c$ can not communicate messages over these ports, therefore the difference has no effect on the specification of $\faulteq_{\fnot{P_i'}, \fnot{E}}$ w.r.t. messages $c$ can communicate.
	Therefore, the correctness of $(P_{c,d}',E)$ w.r.t. $c$ follows directly from the correctness of $(P_c,E)$.
\end{proof}

\begin{lemma}
\label{lem:counter:simple}
	Let $\omega_f, \omega_c'', t_1$ be a counterexample for correctness of clause $\fnot{P_f}$ w.r.t. $d$ and $\omega_c''$ does not provide irrelevant inputs w.r.t. $\fnot{P_f},\fnot{E}$.
	
	Let further $\omega_c'$ and $\omega_f'$ be environments such that
	for all $t$ with for all prefixes $t_1'$ of $t_1$, \ie $t_1 = t_1' \concat t_1''$ it does not hold that $t \faulteq_{\fnot{P_f},\fnot{E}} t_1'$, it holds
	\begin{align*}
	\omega_f'(t) &= \omega_f(t) \setminus \{m | !(m \faulteq_{P_f,E} \irrelevant) \} \\
	\omega_c'(t) &= \omega_c''(t) \setminus \{m | !(m \faulteq_{P_f,E} \irrelevant) \}
	\end{align*}
	
	Let further $\omega_c'$ and $\omega_f'$ be such that for all $t$ with for all $t_1 = t_1' \concat p?v \concat t_1''$ and $t \faulteq_{\fnot{P_f},\fnot{E}} t_1'$ (\ie there exists a prefix of $t_1$ equivalent to $t$), 
	%we remove all events, which are not used for the counterexample (or equivalent to the one used in the counter example in the case of $\omega_c'$). 
	it holds
	\begin{align*}
	\omega_f'(t) &= \{ m | m \faulteq_{\fnot{P_{m+i}},\fnot{A_{m+i}}} \irrelevant \vee m = p?v\} \\
	\omega_c'(t) &= \{m\} \text{ for one $m$ with } m \in \omega_c''(t) \wedge m \faulteq_{\fnot{P_f},\fnot{E}} p?v
	\end{align*}
	
	Then $\omega_f', \omega_c', t_1$ is a counterexample for correctness of clause $\fnot{P_f}$ w.r.t. $d$.
\end{lemma}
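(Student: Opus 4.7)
The plan is to verify that $(\omega_f',\omega_c',t_1)$ satisfies the three conditions for being a counterexample to clause correctness per Definition~\ref{def:safety:clause} with respect to $d$: namely (i) $\omega_f' \faulteq_{\fnot{P_f},\fnot{E}} \omega_c'$, (ii) $\omega_f' \models d \trans{t_1}$, and (iii) no trace $t_c$ satisfies both $\omega_c' \models d \trans{t_c}$ and $t_1 \faulteq_{\fnot{P_f},\fnot{E}} t_c$. The overall strategy is to establish (ii) and (iii) directly from the construction of $\omega_f'$ and $\omega_c'$, while (i) requires a more careful case analysis.

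For (ii), the original counterexample already witnesses $d \trans{t_1}$ via $\omega_f$, so it suffices to check that for each decomposition $t_1 = t_1' \concat \inmsg{p}{v} \concat t_1''$ the input $\inmsg{p}{v}$ is contained in $\omega_f'(t_1')$. Since $t_1'$ is trivially event-equivalent to itself, it is equivalent to a prefix of $t_1$, and the second construction clause places $\inmsg{p}{v}$ into $\omega_f'(t_1')$. For (iii), I would observe that by construction $\omega_c'(t) \subseteq \omega_c''(t)$ for every $t$: in the non-prefix case the set is defined as a subset of $\omega_c''(t)$, and in the prefix case $\omega_c'(t) = \{m\}$ with $m \in \omega_c''(t)$ by hypothesis. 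Hence any witness $t_c$ with $\omega_c' \models d \trans{t_c}$ also yields $\omega_c'' \models d \trans{t_c}$, contradicting the assumption that $(\omega_f,\omega_c'',t_1)$ is already a counterexample.

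The main obstacle is (i). I would pick an arbitrary pair $s_1 \faulteq_{\fnot{P_f},\fnot{E}} s_2$ and split on whether there is a prefix $t_1'$ of $t_1$ with $t_1' \faulteq_{\fnot{P_f},\fnot{E}} s_1$. By symmetry and transitivity of trace event-equivalence, such a prefix exists for $s_1$ iff it exists for $s_2$, so no mixed case arises. In the prefix case, $\omega_f'(s_1)$ contains only the single relevant input $\inmsg{p}{v}$ together with irrelevant messages, and $\omega_c'(s_2)$ is a singleton $\{m\}$ event-equivalent to $\inmsg{p}{v}$, so both conditions of Definition~\ref{def:errorenv} are immediate. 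In the non-prefix case, the hypothesis that $\omega_c''$ does not provide irrelevant inputs forces $\omega_c'(s_2) = \emptyset$, while $\omega_f'(s_1)$ contains only messages irrelevant w.r.t.\ $\fnot{P_f},\fnot{E}$, so both conditions of Definition~\ref{def:errorenv} hold vacuously. The care required in this case analysis, together with the proper use of the hypothesis on $\omega_c''$, is the crux of the proof; the remaining steps are essentially routine unfoldings of the construction.
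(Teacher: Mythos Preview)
Your proposal is correct and follows essentially the same approach as the paper: both verify that $\omega_f' \faulteq_{\fnot{P_f},\fnot{E}} \omega_c'$, that $\omega_f' \models d \trans{t_1}$, and that $\omega_c'(t) \subseteq \omega_c''(t)$ so any $t_c$ executable under $\omega_c'$ is executable under $\omega_c''$. The paper's proof is extremely terse (it simply asserts these facts ``by construction''), whereas you actually carry out the case split on whether $t$ is event-equivalent to a prefix of $t_1$; the only item the paper records that you leave implicit is that $\omega_f'$ and $\omega_c'$ are environments in the extended sense of Definition~\ref{def:environment:extd}, but your case analysis for (i) handles that as well.
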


Note that $\omega_f'$ only offers input messages used in the counterexample and $\omega_c'$ only offers messages required to be equivalent to $\omega_f'$.
Further, of course, $\omega_c'$ does not provide irrelevant input messages. 

\begin{proof}
	By definition, $\omega_c'$ only contains one message per trace and port, and does not provide invisible input.
	
	By construction, $\omega_c'$ and $\omega_f'$ are environments w.r.t. $\faulteq_{\fnot{P_f},\fnot{E}}$ (Definition \ref{def:environment:extd}).
	
	It holds that $\omega_f' \faulteq_{\fnot{P_f},\fnot{E}} \omega_c'$, $\omega_f' \models d \trans{t_1}$, and for all traces $t$ it holds $\omega_c' \models d \trans{t} \implies \omega_c \models d \trans{t}$.
	
	Therefore, if $\omega_f, \omega_c'', t_1$ is a counterexample for the correctness of $P_f,E$ w.r.t. $d$,
	then $\omega_f', \omega_c', t_1$ is a counterexample.
\end{proof}

Showing correctness of $(P_{c,d}',E)$ w.r.t. $d$ is more complex.
\begin{lemma}
	\label{lem:correct:d}
	If all CFTs $(P_{m+i},A_i)$ are correct w.r.t. $d$, then $(P_{c,d}',E)$ is correct w.r.t. $d$.
\end{lemma}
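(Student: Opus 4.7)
The plan is to argue by contradiction: assume that $(P_{c,d}', E)$ is not correct w.r.t. $d$, and derive a violation of one of the hypothesized correctness statements for some $(P_{m+i}, A_{m+i})$ w.r.t. $d$. First I would apply Lemma \ref{lem:clause:complete} to expand each clause $\fnot{P_i'}$ of $\fnot{P_{c,d}'}$ in the explicit form $\fnot{A_1} \fand \ldots \fand \fnot{A_m} \fand \fnot{A_{m+1}} \fand \fnot{P^{jx1}_{m+1}} \fand \ldots \fand \fnot{A_{m+k}} \fand \fnot{P^{jxk}_{m+k}}$, and observe that the events $A_1, \ldots, A_m$ and the output event $E$ all live on ports that $d$ does not communicate over, so they are trivially irrelevant for every message and trace of $d$. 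Consequently, $\faulteq_{\fnot{P_i'}, \fnot{E}}$, when restricted to $d$'s messages and traces, refines exactly to the conjunction of the relations $\faulteq_{\fnot{P^{jxi}_{m+i}}, \fnot{A_{m+i}}}$ over $i = 1, \ldots, k$.

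Next, suppose some clause $\fnot{P_i'}$ admits a counterexample $(\omega_f, \omega_c, t_f)$ per Definition \ref{def:safety:clause}. I would apply Lemma \ref{lem:counter:simple} to reduce it to a minimal one in which $\omega_f$ supplies only the inputs consumed along $t_f$ and $\omega_c$ supplies, at each event-equivalent prefix, exactly one event-equivalent input. Because $d$ is deterministic and no event-equivalent completion exists under $\omega_c$, there must be a divergence point: a longest prefix $t_f^*$ of $t_f$ admitting an event-equivalent trace under $\omega_c$, and the next message in $t_f$ is an output of $d$ on some port $p_j$ that cannot be matched. The divergence cannot occur on an input, since an equivalent input in $\omega_c(\cdot)$ is guaranteed by Definition \ref{def:errorenv} and the determinism of $d$ ensures that it can be consumed.

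The offending output $p_j!v$ singles out one concrete sub-CFT $(P_{m+i}, A_{m+i})$ with $A_{m+i}$ on $p_j$ and of the appropriate type ($\texists$ or $\tvalue$) so that $p_j!v$ is not irrelevant w.r.t. $\fnot{A_{m+i}}$. From $(\omega_f, \omega_c, t_f^* \concat p_j!v)$ I would then construct projected environments $(\omega_f'', \omega_c'')$ that retain only those inputs relevant to $\fnot{P^{jxi}_{m+i}}$, and verify: (a) $\omega_f'' \faulteq_{\fnot{P^{jxi}_{m+i}}, \fnot{A_{m+i}}} \omega_c''$; (b) $\omega_f'' \models d \trans{t_f^* \concat p_j!v}$; and (c) no trace event-equivalent to $t_f^* \concat p_j!v$ is executable under $\omega_c''$. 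This yields a counterexample to the assumed correctness of $(P_{m+i}, A_{m+i})$ w.r.t. $d$, producing the desired contradiction.

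The main obstacle I expect is establishing (c): ruling out any event-equivalent completion under the coarser environment $\omega_c''$. Because $\faulteq_{\fnot{P^{jxi}_{m+i}}, \fnot{A_{m+i}}}$ treats strictly more messages as irrelevant than $\faulteq_{\fnot{P_i'}, \fnot{E}}$, a priori more candidate completions become admissible, so one must exploit the minimality from Lemma \ref{lem:counter:simple} together with the determinism of $d$ to argue that every candidate completion under $\omega_c''$ coincides with one that was already shown impossible under $\omega_c$. A secondary subtlety is matching the event type of $A_{m+i}$ with the flavor of the divergence (value mismatch vs.\ existence mismatch), which determines the specific $i$ against which the contradiction is derived.
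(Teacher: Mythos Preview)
Your setup---assuming a failing clause $\fnot{P_f}$, simplifying the counterexample via Lemma~\ref{lem:counter:simple}, and locating the first relevant message $m_1$ after the longest matchable prefix $t_f^*$---agrees with the paper. The paper also argues, as you do, that $m_1$ must be an output of $d$. Where your proposal diverges is in how the contradiction is produced, and that divergence is a genuine gap.

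You try to project the counterexample \emph{down} to a single sub-CFT $(P_{m+i},A_{m+i})$ selected by the port of $m_1$, and then violate the correctness of that one sub-CFT. The obstacle you flag in step (c) is fatal, and the proposed remedy does not close it. Concretely: the correctness of $(P_{m+i},A_{m+i})$ guarantees a trace $t_c$ under $\omega_c'$ with $t_1 \faulteq_{\fnot{P^{jxi}_{m+i}},\fnot{A_{m+i}}} t_c$. This $t_c$ matches $t_1$ on port $p_j$ but is free to differ from $t_1$ on \emph{other} connected output ports $p_l$, since those are irrelevant for the coarser sub-relation. Hence $t_c$ is not ``already shown impossible under $\omega_c$''---it is a perfectly good trace under $\omega_c'$ that simply fails to be $\faulteq_{\fnot{P_f},\fnot{E}}$-equivalent to $t_1$. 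Restricting $\omega_c''$ further to exclude such $t_c$ breaks either (a) or (b): if you drop from $\omega_f''$ the inputs relevant to the other $\fnot{P^{jxl}_{m+l}}$ you lose $\omega_f''\models d\trans{t_f^*\concat m_1}$, and if you keep them in $\omega_f''$ but not in $\omega_c''$ you lose $\omega_f''\faulteq\omega_c''$.

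The paper runs the contradiction in the opposite direction. It does not try to falsify any single sub-CFT; instead it \emph{uses} the correctness of every sub-CFT $(P_{m+i},A_{m+i})$ whose clause occurs in $\fnot{P_f}$ to obtain, for each, a witness trace $t^{(i)}_2$ under $\omega_c'$ with $t_1 \faulteq_{\fnot{P^{jxi}_{m+i}},\fnot{A_{m+i}}} t^{(i)}_2$. The crucial step is then an induction on trace length showing that all these $t^{(i)}_2$ are in fact the \emph{same} trace: at inputs this follows because the simplified $\omega_c'$ from Lemma~\ref{lem:counter:simple} offers at most one relevant input per prefix; at outputs it follows from determinism of $d$. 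Since this single trace is simultaneously equivalent to $t_1$ under every sub-relation, by Lemma~\ref{lem:clause:complete} it is equivalent under $\faulteq_{\fnot{P_f},\fnot{E}}$, contradicting the assumption that $(\omega_f',\omega_c',t_1)$ is a counterexample for $\fnot{P_f}$. So the minimality and determinism you invoke are indeed the right tools, but they are used to \emph{merge} the witnesses coming from all sub-CFTs, not to rule out witnesses for one of them.
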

\begin{proof}
	All CFTs of $d$ are correct w.r.t. $d$.
	We assume towards contradiction that CFT $(P_{c,d}', E)$ is not correct w.r.t. $d$.
	
	Thus, there exist a clause $\fnot{P_f}$ defined by the CFT $(P_{c,d}', E)$, such that $\fnot{P_f}$ is not correct w.r.t $d$ according to Definition \ref{def:safety:cft}.
	
	According to Definition \ref{def:safety:clause}, there has to exists a correct environment $\omega_c$ and an erroneous environment $\omega_f$ with $\omega_f \faulteq_{\fnot{P_f},\fnot{E}} \omega_c$, and
	a trace $t_1$ with $\omega_f \models c \trans{t_1}$ such that for all $t_2$ with $\omega_c \models d \trans{t_2}$, 
	it does not hold that $t_1 \faulteq_{\fnot{P},\fnot{E}} t_2$.
	
	Note that $t_1$ and $t_2$ can only contain messages, which are communicated via ports which are input or output ports of $d$, since according to Definition \ref{def:safety:clause}, $d$ can communicate the traces.

	%%% Construct a strategy $\omega_c'$ without invisible events and with exactly one visible event per port. 
	From the counterexample $\omega_f, \omega_c$, and $t_1$, we construct a simpler counterexample.
	We construct an environment $\omega_c''$ by removing all invisible events:
	$\omega_c''(t) = \{m | m \in \omega_c(t) \wedge !(m \faulteq_{\fnot{P_f},\fnot{E}} \irrelevant)\}$ for all $t$.
	According to Lemma 7 in \cite{greiner2016noninterference}, $\omega_f, \omega_c'', t_1$ is a counterexample for correctness of $P_f$ w.r.t. $d$.

	According to Lemma \ref{lem:counter:simple}, there exists a simpler counterexample $\omega_f'$, $\omega_c'$, $t_1$.

	%% Now we show that the constructed counter example is not a counter example for safety.
	% We select the longest sub-trace $t_1'$ from $t_1$, such that there exists an equivalent trace t_2.
	% By using the safety of $d$ w.r.t. the CFTs, we show that there exists a longer trace.
	% This contradicts the original assumtion that $d$ is not safe.
	We now show that  $\omega_f', \omega_c', t_1$ can not be a counterexample for correctness.
	
	Let $t_1'$ be the longest prefix of $t_1$, such that there exists a $t_2'$ under $\omega_f'$ and $\omega_c'$ with $t_1' \faulteq_{\fnot{P_f},\fnot{E}} t_2'$.
	Then $t_1' \concat m_1$ is a prefix of $t_1$.
	Further, $m_1$ has to be relevant w.r.t. $\faulteq_{\fnot{P_f},\fnot{E}}$, because otherwise $t_1' \concat m_1$ would be longer than $t_1'$ and $t_1' \concat m_1 \faulteq_{\fnot{P_f},\fnot{E}} t_2'$.
	
	Also, $m_1$ can not be an input to $d$. 
	Otherwise, we would know that $m_1 \in \omega_f'(t_1')$. 
	Since $\omega_f' \faulteq_{\fnot{P_f},\fnot{E}} \omega_c'$ there exists $m_2 \in \omega_c'(t_2')$ with $m_1 \faulteq_{\fnot{P_f},\fnot{E}} m_2$.
	By construction of $\faulteq_{\fnot{P_f},\fnot{E}}$ (Definition \ref{def:message:equiv}) $m_1$ and $m_2$ are communicated over the same port (Definition \ref{def:deterministic}), and since $d$ is input-indiscriminate $t_1' \concat m_1 \faulteq_{\fnot{P_f},\fnot{E}} t_2' \concat m_2$ and $\omega_c' \models d \trans{t_2' \concat m_2}$ (also Definition \ref{def:deterministic}).
	This means, again, $t_1'$ would not be the longest suffix with the property mentioned above.
	
	Since all CFTs of $d$ are correct w.r.t. $d$, also all clauses $P_i'$, defined by its CFTs, are correct w.r.t. $d$.
	Let $\fnot{P^i_j}$ and $\fnot{P^k_l}$ be arbitrary clauses for CFTs with output events $A^x_i$ and $A^y_l$ contained in the clause $P_f$.
	Due to correctness w.r.t. $d$, we know that there exist traces $t^a_2$ and $t^b_2$ such that $\omega_c' \models d \trans{t^a_2}$ and $\omega_c' \models d \trans{t^b_2}$ with $t_1 \faulteq_{\fnot{A^x_i}, \fnot{P^i_j}} t^a_2$ and $t_1 \faulteq_{\fnot{A^y_l}, \fnot{P^k_l}} t^b_2$.
	
	We inductively show that $t^a_2$ and $t^b_2$ are in fact the same traces, i.e. $t^a_2 = t^b_2$.
	
	Induction Hypothesis:
	For prefixes ${t^a}_2'$ of $t^a_2$ with length $n$, and all prefixes ${t^b}_2'$ of $t^b_2$ with length $n$ it holds that ${t^a}_2' = {t^b}_2'$.
	
	Induction Start:
	For length $n=0$ ${t^a}_2'$ and ${t^b}_2'$ are empty traces and the Hypothesis holds trivially.
	
	Induction Step:
	Let ${t^a}_2' = {t^a}_2'' \concat m^a_2$ and ${t^b}_2' = {t^b}_2'' \concat m^b_2$.
	
	Case 1:
	$m^a_2$ is input. Since $\omega_c'$ does not provide irrelevant inputs, $m^a_2$ has to be relevant w.r.t. $\faulteq_{\fnot{P_f},\fnot{E}}$, and analogous for $m^b_2$.
	Further, since $t^a_2 \faulteq_{\fnot{A^x_i}, \fnot{P^i_j}} t_1$ and $t^b_2 \faulteq_{\fnot{A^y_l}, \fnot{P^k_l}} t_1$, there exists an $m_1$ with $t_1' \concat m_1$ being a prefix of $t_1$ with ${t^a}_2' \concat m^a_2 \faulteq_{\fnot{A^x_i}, \fnot{P^i_j}} t_1' \concat m_1$ and ${t^b}_2' \concat m^b_2 \faulteq_{\fnot{A^y_l}, \fnot{P^k_l}} t_1' \concat m_1$.
	Therefore, $m_1, m^a_2, m^b_2$ are communicated on the same port.
	By construction, $\omega_c'$ only provides one relevant input per port, therefore $m^a_2 = m^b_2$, and thus with the induction hypothesis ${t^a}_2' \concat m^a_2 = {t^b}_2' \concat m^b_2$.
	
	Case 2: $m^a_2$ is output.
	Since $d$ is output deterministic and by induction hypothesis, $m^a_2 = m^b_2$, and again ${t^a}_2' \concat m^a_2 = {t^b}_2' \concat m^b_2$.
	
	Since $t^a_2 = t^b_2$ and by the structure of $P_f$ (Lemma \ref{lem:clause:complete}), by the definition of $\faulteq_{\fnot{A^x_i}, \fnot{P^i_j}}$, $\faulteq_{\fnot{A^y_l}, \fnot{P^k_l}}$, and $\faulteq_{\fnot{P_f},\fnot{E}}$ according to Definition \ref{def:message:equiv}, and since $t^1 \faulteq_{\fnot{A^x_i}, \fnot{P^i_j}} t^a_2$, and $t^1 \faulteq_{\fnot{A^y_l}, \fnot{P^k_l}} t^a_2$, it follows iteratively over all respective clauses in $P_f$ that $t^1 \faulteq_{\fnot{P_f}, \fnot{E}} t^a_2$.
	Therefore, there exists a $t_2$ which contradicts the assumption that $\omega_f', \omega_c', t_1$ is a counterexample for the safety of $d$, thus also $\omega_f, \omega_c, t_1$ is no counterexample.
	Finally, this shows that $(P_{c,d}',E)$ is correct w.r.t. $d$.
\end{proof}

Now, we have constructed a CFT $(P_{c,d}', E)$, which is correct w.r.t. $c$ and $d$.
We can show that $(P_{c,d}', E)$ is a also correct w.r.t. the composition o $c$ and $d$.

\begin{theorem}
	\label{thm:cft:composition:strict}
	Let $c$ and $d$ be components, $p_1 \ldots p_n$ ports, which are input ports of $c$ and outputs ports of $d$, $A^1_1 \ldots A^1_{m_1}, \ldots A^n_{1} \ldots A^n_{m_n}$ events with $A^i_j$ being an event on port $p_i$, i.e. an input event of $c$ and an output event of $d$.
	Let further be $(P_c,E)$ be a CFT of $c$ and $(P^i_j, A^i_j)$ the CFTs of $d$ for the output events $A^i_j$.
	
	Then the strict CFT composition $(P_{c,d}',E)$ (Definition \ref{def:cft:composition:strict}) is correct w.r.t. to the composition of $c$ and $d$. 
\end{theorem}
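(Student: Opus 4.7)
The plan is to bootstrap from Lemma~\ref{lem:correct:c} and Lemma~\ref{lem:correct:d}, which together guarantee that $(P_{c,d}',E)$ is correct with respect to $c$ and $d$ \emph{individually}, into the desired statement about the composed system. I would proceed by contradiction: if $(P_{c,d}',E)$ were not correct w.r.t.\ the composition of $c$ and $d$, then by Definitions~\ref{def:safety:cft} and~\ref{def:safety:clause} there would be a clause $\fnot{P_f'}$ of $\fnot{P_{c,d}'}$, environments $\omega_f \faulteq_{\fnot{P_f'},\fnot{E}} \omega_{cd}$, and a trace $t_f$ executable by the composition under $\omega_f$, for which no trace $t_{cd}$ executable under $\omega_{cd}$ satisfies $t_f \faulteq_{\fnot{P_f'},\fnot{E}} t_{cd}$. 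The factorisation of the clause given by Lemma~\ref{lem:pidiff}, namely $\fnot{P_f'} = \fnot{P_i} \fand \fnot{P_{m+1}} \fand \ldots \fand \fnot{P_{m+k}}$, is precisely what will let the argument route the erroneous behaviour through $c$ and $d$ separately: the $\fnot{P_i}$ part constrains the $c$-side, the $\fnot{P_{m+j}}$ parts constrain the $d$-side, and the retained shared-port literals $\fnot{A_{m+1}},\ldots,\fnot{A_{m+k}}$ synchronise the two sides.

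Next, I would project the composition-level counterexample to each component. A run of the composed system is, by the semantics of parallel composition, a synchronisation of a run of $d$ with a run of $c$ on the shared ports $p_1,\ldots,p_n$, so $t_f$ factors canonically into a $d$-trace $t_f^d$ and a $c$-trace $t_f^c$, and the environments $\omega_f$ and $\omega_{cd}$ can be projected to $\omega_f^d, \omega_{cd}^d$ and $\omega_f^c, \omega_{cd}^c$ in a way that preserves $\faulteq_{\fnot{P_f'},\fnot{E}}$. Applying Lemma~\ref{lem:correct:d} to the $d$-projection yields a trace $t_d^{*}$ with $\omega_{cd}^d \models d \trans{t_d^{*}}$ and $t_f^d \faulteq_{\fnot{P_f'},\fnot{E}} t_d^{*}$. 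The shared-port outputs of $t_d^{*}$ then specify the internal messages that $c$ will receive under the non-erroneous composition; extending $\omega_{cd}^c$ with these internal messages produces an environment that remains event-equivalent to the corresponding extension of $\omega_f^c$, precisely because strict composition retained the $\fnot{A^i_j}$ literals in $\fnot{P_f'}$. Applying Lemma~\ref{lem:correct:c} to this extension then produces $t_c^{*}$ with $t_f^c \faulteq_{\fnot{P_f'},\fnot{E}} t_c^{*}$.

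Finally, I would interleave $t_c^{*}$ and $t_d^{*}$ into a composition trace $t_{cd}^{*}$ executable under $\omega_{cd}$. Since the shared-port actions in $t_c^{*}$ and $t_d^{*}$ agree by construction, and both components are deterministic (Definition~\ref{def:deterministic}), the interleaving is well-defined, the composition executes $t_{cd}^{*}$ under $\omega_{cd}$, and $t_f \faulteq_{\fnot{P_f'},\fnot{E}} t_{cd}^{*}$, contradicting the assumed counterexample. The main obstacle will be this decomposition/re-assembly step: defining the projections of $\omega_f$ and $\omega_{cd}$ so that the premises of Lemmas~\ref{lem:correct:c} and~\ref{lem:correct:d} are genuinely satisfied, and justifying that the shared-port outputs produced by $t_d^{*}$ can legitimately be fed to $c$ as inputs consistent with an environment equivalent to what $t_f^c$ actually observed. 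The essential reason the argument closes is that strict composition, in contrast to Definition~\ref{def:cft:composition}, keeps the literal shared-port events $\fnot{A^i_j}$ in each clause, which forces $\faulteq_{\fnot{P_f'},\fnot{E}}$ to constrain the shared-port communication consistently on both sides of the boundary.
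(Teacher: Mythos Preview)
Your overall strategy is sound and begins exactly where the paper's proof begins: invoke Lemma~\ref{lem:correct:c} and Lemma~\ref{lem:correct:d} to obtain correctness of $(P_{c,d}',E)$ w.r.t.\ $c$ and w.r.t.\ $d$ separately. From there, however, the paper takes a different and much shorter route: it simply cites Theorem~1 of~\cite{greiner2016noninterference}, a general compositionality result for this style of property, which states that if a specification is correct w.r.t.\ each of two components, then it is correct w.r.t.\ their composition. With that external theorem in hand, the proof is two lines.

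What you propose instead is essentially to re-derive that compositionality theorem in situ: assume a composition-level counterexample, project it to $c$- and $d$-side traces and environments, apply the two lemmas, and re-assemble. This is a legitimate alternative and is, in spirit, what the proof of the cited theorem does. The trade-off is clear: your route is self-contained and makes the role of the retained literals $\fnot{A^i_j}$ explicit, but the ``main obstacle'' you identify (defining the projected environments so that the premises of Lemmas~\ref{lem:correct:c} and~\ref{lem:correct:d} are genuinely met, and feeding $t_d^{*}$'s shared-port outputs back to $c$ as an environment equivalent to what $t_f^c$ observed) is real and non-trivial to discharge rigorously; the paper sidesteps all of it by appealing to the already-proved general result. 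If you want to follow your plan, you would effectively be reproving Theorem~1 of~\cite{greiner2016noninterference} specialised to this setting.
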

\begin{proof}
	From Lemmas \ref{lem:correct:c} and \ref{lem:correct:d} it follows that $(P_{c,d}',E)$ is correct w.r.t. $c$ and w.r.t. $d$.
	From Theorem 1 in \cite{greiner2016noninterference}, it then directly follows that $(P_{c,d}, E)$ is correct w.r.t. the composition of $c$ and $d$.
\end{proof}

It is left to show that correctness of the strictly composed CFT for the composition, implies that the composed CFT according to Definition \ref{def:cft:composition} is correct w.r.t. the composition of $c$ and $d$.
We prove Theorem \ref{thm:compositioncorrectness}.
\begin{proof}[Proof for Theorem \ref{thm:compositioncorrectness}]
When we compare the definition of strict CFT composition (Definition \ref{def:cft:composition:strict}) and CFT composition (Definition \ref{def:cft:composition}), we see that the strict composition keeps the events, on which the components are composed in the formula, while they disappear in the definition of non-strict composition.

Note that the normal composition therefore considers messages over the respective ports irrelevant.

We consider messages over these ports as outputs, therefore is the equivalence relation over messages implied by strict composition more strict that the equivalence relation over messages implied by composition.
Therefore every counterexample for $(P_{c,d},E)$ is also a counterexample for $(P_{c,d}',E)$

Therefore it directly follows that $(P_{c,d},E)$ is correct w.r.t. the composition of $c$ and $d$. 
\end{proof}

\renewcommand{\and}{and}

\end{document}